\documentclass{amsart}
\usepackage{amssymb}
\input xy
\xyoption{all}

\newcommand{\C}{\mathcal C}

\newcommand{\N}{\mathbb N}
\newcommand{\R}{\mathbb R}

\newcommand{\F}{\mathcal F}

\newtheorem{theorem}{Theorem}

\newtheorem{lemma}{Lemma}

\newtheorem{problem}{Problem}
\newtheorem{proposition}{Proposition}
\newtheorem{df}{Definition}
\newtheorem{ex}{Example}

\begin{document}

\title{Equilibrium for max-plus payoff}

\author{Taras Radul}

\maketitle

Institute of Mathematics, Casimirus the Great University of Bydgoszcz, Poland;
\newline
Department of Mechanics and Mathematics, Ivan Franko National University of Lviv,
Universytetska st., 1. 79000 Lviv, Ukraine.
\newline
e-mail: tarasradul@yahoo.co.uk

\textbf{Key words and phrases:}  Non-additive measures, Nash equilibrium, equilibrium under uncertainty, possibility capacity, max-plus integral, non-linear convexity.

\subjclass[MSC 2010]{28E10,91A10,52A01,54H25}

\begin{abstract} We study equilibrium concepts in non-cooperative games under uncertainty where both beliefs and mixed strategies are represented by non-additive measures (capacities). In contrast to the classical Nash framework based on additive probabilities and linear convexity, we employ capacities and max-plus integrals to model qualitative and idempotent decision criteria. Two equilibrium notions are investigated: Nash equilibrium in mixed strategies expressed by capacities, and equilibrium under uncertainty in the sense of Dow and Werlang, where players choose pure strategies but evaluate payoffs with respect to non-additive beliefs. For games with compact strategy spaces and continuous payoffs, we establish existence results for both equilibrium concepts using abstract convexity techniques and a Kakutani-type fixed point theorem.
\end{abstract}


\section{Introduction}

Classical Nash equilibrium theory relies on fixed point arguments formulated within the framework of linear convexity. In this setting, mixed strategies are modeled as probability measures, that is, additive measures defined on the sets of pure strategies. Over the last decades, however, there has been a growing interest in extending Nash equilibrium theory beyond this classical linear structure. For example, Aliprantis, Florenzano and Tourky \cite{AFT} worked in ordered topological vector spaces, Luo \cite{L} studied games on topological semilattices, while Vives \cite{Vi} considered strategic interactions on complete lattices. The existence of Nash equilibria in the context of idempotent convexity was established by Briec and Horvath \cite{Ch}.

The use of additive probability measures presupposes precise knowledge of the likelihood of all relevant events. In many contemporary economic and decision-theoretic models this assumption is unrealistic. Decision-making under uncertainty often involves situations in which probabilities are unknown or only vaguely specified. To address this issue, Gilboa \cite{Gil} and Schmeidler \cite{Sch} proposed axiomatic foundations for expectations represented by Choquet integrals with respect to non-additive measures, known as capacities or fuzzy measures.

Dow and Werlang \cite{DW} applied this approach to two-player games in which players hold non-additive beliefs, represented by convex capacities, about their opponents' actions, while still choosing pure strategies. Payoffs were evaluated using Choquet integrals, and an appropriate equilibrium concept was introduced together with an existence result. This framework was extended to games with finitely many players in \cite{EK}. Another line of research on games with Choquet payoff functions can be found in \cite{Ma}. In all these works, the sets of pure strategies were assumed to be finite.

An alternative to the Choquet expected utility model is provided by qualitative decision theory, where expectations are expressed via the Sugeno integral. This approach has been extensively studied in recent years (see, for example, \cite{DP}, \cite{DP1}, \cite{CH1}, \cite{CH}). Unlike the Choquet integral, which performs an averaging operation, the Sugeno integral selects a median-type value of utilities and thus reflects a purely qualitative aggregation mechanism.

The equilibrium notion introduced in \cite{DW} and \cite{EK} was adapted to games with payoff functions defined by the Sugeno integral in \cite{R4}, and further generalized to t-normed integrals in \cite{R5}.

Kozhan and Zarichnyi \cite{KZ}, as well as Glycopantis and Muir \cite{GM}, considered games in which players not only form non-additive beliefs about their opponents but also employ mixed strategies expressed by capacities. An analogous framework for games with Sugeno payoff functions was studied in \cite{R3} and generalized to t-normed integrals in \cite{R6}. Games with possibility-valued strategies were investigated by Hosni and Marchioni in \cite{HM} and \cite{HM1}, while further existence results for equilibria with Sugeno payoffs were obtained in \cite{Pet}.

In this paper we focus on games with payoff functions represented by the max-plus integral, a fuzzy integral based on the maximum and addition operations, introduced in \cite{R2}. This integral provides a natural link between capacity theory and idempotent (max-plus) analysis. We analyze both equilibrium concepts mentioned above - Nash equilibrium in mixed strategies and equilibrium under uncertainty - for games with max-plus payoff functions.

The paper is organized as follows. Section 2 recalls basic notions concerning capacities and the max-plus integral. In Section 3 we study Nash equilibria in games where players are allowed to use mixed non-additive strategies represented by capacities. Section 4 is devoted to equilibrium under uncertainty in the sense of Dow and Werlang and to a comparison between the two equilibrium concepts. Final remarks and conclusions are collected in Section 5.

\section{Capacities and max-plus integral:preliminaries}

Throughout the paper all spaces are assumed to be compact Hausdorff spaces (compacta), except for $\R$, and all maps are assumed to be continuous. By $\F(X)$ we denote the family of all closed subsets of a compactum $X$. The Banach space of continuous real-valued functions on $X$, endowed with the supremum norm, is denoted by $C(X)$. For $c\in\R$, the constant function on $X$ with value $c$ is denoted by $c_X$. The pointwise lattice operations $\vee$ and $\wedge$ are used on $C(X)$.

We begin by recalling the notion of a capacity on a compactum $X$. Our terminology follows \cite{NZ}.

\begin{df}
A function $\nu:\F(X)\to [0,1]$ is called an {\it upper-semicontinuous capacity} on $X$ if the following conditions hold for all closed subsets $F,G\subset X$:
\begin{enumerate}
\item $\nu(X)=1$ and $\nu(\emptyset)=0$;
\item $F\subset G$ implies $\nu(F)\le \nu(G)$;
\item if $\nu(F)<a$ for some $a\in[0,1]$, then there exists an open set $O\supset F$ such that $\nu(B)<a$ for every compactum $B\subset O$.
\end{enumerate}
\end{df}

For a singleton $\{a\}\subset X$ we write simply $\nu(a)$ instead of $\nu(\{a\})$.
Following \cite{NZ}, each capacity $\nu$ is extended to open subsets $U\subset X$ by
$$
\nu(U)=\sup\{\nu(K)\mid K\subset U,\ K\ \text{closed}\}.
$$

It was shown in \cite{NZ} that the space $MX$ of all upper-semicontinuous capacities on $X$ is compact when equipped with the topology generated by the subbase consisting of sets of the form
$$
O_-(F,a)=\{\nu\in MX\mid \nu(F)<a\},
$$
where $F$ is closed and $a\in[0,1]$, and
$$
O_+(U,a)=\{\nu\in MX\mid \nu(U)>a\},
$$
where $U$ is open and $a\in[0,1]$. Since all capacities considered in this paper are upper-semicontinuous, we shall simply refer to elements of $MX$ as capacities.

\begin{df}\label{pos}
A capacity $\nu\in MX$ is called a {\it necessity capacity} (respectively, a {\it possibility capacity}) if for every family $\{A_t\}_{t\in T}$ of closed subsets of $X$ such that $\bigcup_{t\in T}A_t$ is closed, the following condition holds:
$$
\nu\Bigl(\bigcap_{t\in T}A_t\Bigr)=\inf_{t\in T}\nu(A_t)
\quad
\Bigl(\nu\Bigl(\bigcup_{t\in T}A_t\Bigr)=\sup_{t\in T}\nu(A_t)\Bigr).
$$
\end{df}

(See \cite{WK} for more details.)

We denote by $NX$ and $\Delta X$ the subspaces of $MX$ consisting of necessity and possibility capacities, respectively. Since $X$ is compact and $\nu$ is upper-semicontinuous, a capacity $\nu$ belongs to $NX$ if and only if it satisfies $\nu(A\cap B)=\min\{\nu(A),\nu(B)\}$ for all closed $A,B\subset X$.

Given a capacity $\nu$ on $X$, its dual capacity $\kappa X(\nu)$ is defined by
$$
\kappa X(\nu)(F)=1-\nu(X\setminus F),\quad F\in\F(X).
$$
The mapping $\kappa X:MX\to MX$ is a homeomorphism and an involution \cite{NZ}. Moreover, $\nu$ is a necessity capacity if and only if $\kappa X(\nu)$ is a possibility capacity. Consequently, $\nu\in\Delta X$ if and only if $\nu(A\cup B)=\max\{\nu(A),\nu(B)\}$ for all closed $A,B\subset X$. Both $NX$ and $\Delta X$ are closed subspaces of $MX$.

We now recall the definition of the max-plus integral. For $\varphi\in C(X)$ and $t\in\R$ we set $\varphi_t=\{x\in X\mid \varphi(x)\ge t\}$. We adopt the convention $\ln(0)=-\infty$ and $-\infty+\gamma=-\infty$ for all $\gamma\in\R$.

\begin{df}\cite{R2}
Let $\varphi\in C(X)$ and $\nu\in MX$. The max-plus integral of $\varphi$ with respect to $c$ is defined by
$$
\int_X^{\vee+} \varphi\, d\nu
=
\max\{\ln(\nu(\varphi_t))+t\mid t\in\R\}.
$$
\end{df}

The maximum in the above formula exists due to the upper-semicontinuity of the capacity $\nu$. It is sufficient to consider $t\in[\min\varphi(X),\max\varphi(X)]$.

For $\psi\in C(X)$ we define a function $lX^\psi:MX\to\R$ by
$$
lX^\psi(\nu)=\int_X^{\vee+} \psi\, d\nu.
$$

\begin{lemma}\cite{R1}\label{contint} The map $lX^\psi$ is continuous for each $\psi\in C(X)$.
\end{lemma}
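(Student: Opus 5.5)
Fix $\psi\in C(X)$ and put $m=\min\psi(X)$, $M=\max\psi(X)$. Since $\nu(\psi_m)=\nu(X)=1$, we have $lX^\psi(\nu)\ge m$ for every $\nu$, and clearly $lX^\psi(\nu)\le M$. The plan is to prove continuity at a point $\nu_0\in MX$ by showing that $lX^\psi$ is both upper and lower semicontinuous there. In each case we use the subbase sets $O_-(F,a)$ and $O_+(U,a)$ of the topology on $MX$. The only property we need is monotonicity of the map $t\mapsto \ln\nu(\psi_t)+t$ in its two ingredients, together with the upper-semicontinuity of capacities.

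\emph{Lower semicontinuity.} Let $c<lX^\psi(\nu_0)$. Choose $t_0\in[m,M]$ with $\ln\nu_0(\psi_{t_0})+t_0>c$; then $\nu_0(\psi_{t_0})>0$. Pick $\delta>0$ so small that $\ln\nu_0(\psi_{t_0})+t_0-\delta>c$. The open set $U=\{x\mid \psi(x)>t_0-\delta\}$ contains $\psi_{t_0}$, so $\nu_0(U)\ge\nu_0(\psi_{t_0})$. Set $a=\nu_0(\psi_{t_0})e^{-\eta}$ with $\eta>0$ small, so that $a<\nu_0(U)$ and still $\ln a+t_0-\delta>c$. For $\nu\in O_+(U,a)$ we have $\nu(U)>a$, and by the definition of $\nu$ on open sets there is a closed $K\subset U$ with $\nu(K)>a$. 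Since $U\subset\psi_{t_0-\delta}$, monotonicity gives $\nu(\psi_{t_0-\delta})>a$. Hence $lX^\psi(\nu)\ge \ln a+t_0-\delta>c$ on the neighbourhood $O_+(U,a)$.

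\emph{Upper semicontinuity.} This is the main step. Let $c>lX^\psi(\nu_0)$, so $\ln\nu_0(\psi_t)+t<c$ for all $t\in[m,M]$, that is, $\nu_0(\psi_t)<e^{c-t}$. We need a uniform bound over the whole compact interval using only finitely many subbase sets. Choose a partition $m=t_0<t_1<\dots<t_n=M$ with mesh smaller than some $\delta>0$, where $\delta$ is fixed so that $\nu_0(\psi_{t_i})<e^{c-\delta-t_{i+1}}$ for each $i$. Such a $\delta$ exists: the function $t\mapsto \ln\nu_0(\psi_t)+t$ is upper semicontinuous in $t$ on a compact interval (because $\nu_0$ is upper-semicontinuous and the sets $\psi_t$ decrease), so its maximum is strictly less than $c$, and a small mesh then absorbs the shift from $t_i$ to $t_{i+1}$. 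When $\nu_0(\psi_{t_i})=0$ the required inequality is automatic. Take the neighbourhood
$$W=\bigcap_{i=0}^{n} O_-\bigl(\psi_{t_i},\,e^{c-\delta-t_{i+1}}\bigr),$$
with the convention $t_{n+1}=M$ (for $i=n$ the bound $e^{c-\delta-M}$ can be used directly). For $\nu\in W$ and $t\in(t_i,t_{i+1}]$ we get $\psi_t\subset\psi_{t_i}$, hence
$$\ln\nu(\psi_t)+t<c-\delta-t_{i+1}+t\le c-\delta.$$
The case $t=t_0=m$ is handled by the $i=0$ term. Therefore $lX^\psi(\nu)<c$ on $W$.

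The delicate point is the choice of $\delta$ in the upper semicontinuity argument. The natural route is the compactness argument sketched above: each $t$ has a neighbourhood in which $\nu_0(\psi_{t'})$ stays below $e^{c-\delta_t-t'}$, obtained from upper-semicontinuity applied at $\psi_t$. Extracting a finite subcover then yields a uniform $\delta$ and the partition simultaneously.
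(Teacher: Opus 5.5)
Your argument is correct. The paper itself only cites this lemma, but your proof is exactly the specialization, with the strategy variable frozen, of the paper's proof of Lemma~\ref{C}: for upper semicontinuity, a fine partition of $[\min\psi,\max\psi]$ with $O_-$-neighbourhoods of the sets $\psi_{t_k}$ and a one-step index shift; for lower semicontinuity, a single good level pushed down to an open superlevel set with an $O_+$-neighbourhood. The step you call delicate needs no compactness or finite-subcover argument, since $\ln\nu_0(\psi_{t_i})+t_{i+1}\le lX^\psi(\nu_0)+(t_{i+1}-t_i)$, so $\delta=(c-lX^\psi(\nu_0))/3$ together with any mesh below $\delta$ already gives $\nu_0\in W$.
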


The tensor product of probability measures is a classical and powerful construction, widely used in the study of spaces of probability measures on compacta (see, for example, Chapter~8 in \cite{FF}).
An analogue of this operation for capacities was introduced in \cite{KZ} and is based on the monad structure of the capacity functor.
An explicit formula for the tensor product of capacities, avoiding the categorical formalism, was later obtained in \cite{R4}.

For $\mu_1\in MX_1$, $\mu_2\in MX_2$, and $B\in\F(X_1\times X_2)$, the tensor product is defined by
$$
\mu_1\otimes\mu_2(B)
=
\max\Bigl\{
t\in[0,1]\ \Big|\
\mu_1\bigl(\{x\in X_1\mid \mu_2(p_2((\{x\}\times X_2)\cap B))\ge t\}\bigr)\ge t
\Bigr\}.
$$
Note that this formula can equivalently be written in the form
$$
\mu_1\otimes\mu_2(B)
=
\max\Bigl\{
\mu_1\bigl(\{x\in X_1\mid \mu_2(p_2((\{x\}\times X_2)\cap B))\ge t\}\bigr)\wedge t
\ \Big|\ t\in[0,1]
\Bigr\}.
$$

The problem of multiplying capacities has been extensively studied in possibility theory and its applications to game theory and decision theory. In this context, the notions of a joint possibility distribution (see, for example, \cite{HM}) and aggregation of capacities \cite{Df} are commonly used. As discussed in \cite{R6}, the differences between these approaches are largely terminological.

It was observed in \cite{KZ} that the definition of the tensor product can be extended inductively to an arbitrary finite number of factors.

The following lemma was proved in \cite{R4}.

\begin{lemma}\label{P}
Let $X_i$ be a compactum, $A_i\in\F(X_i)$, and $\mu_i\in MX_i$ be such that $\mu_i(X_i\setminus A_i)=0$ for each $i\in\{1,\dots,n\}$.
Then
$$
\otimes_{i=1}^n\mu_i\Bigl(\prod_{i=1}^n X_i\setminus\prod_{i=1}^n A_i\Bigr)=0.
$$
\end{lemma}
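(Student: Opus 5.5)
We argue by induction on $n$, using the inductive definition of the tensor product from \cite{KZ}: $\otimes_{i=1}^n\mu_i=(\otimes_{i=1}^{n-1}\mu_i)\otimes\mu_n$ under the identification $\prod_{i=1}^n X_i=(\prod_{i=1}^{n-1}X_i)\times X_n$. The case $n=1$ is exactly the hypothesis. Hence everything reduces to the case $n=2$, applied with $X_1'=\prod_{i=1}^{n-1}X_i$, $A_1'=\prod_{i=1}^{n-1}A_i$ (closed), $\mu_1'=\otimes_{i=1}^{n-1}\mu_i$. The induction hypothesis supplies $\mu_1'(X_1'\setminus A_1')=0$, and $X_1'\times X_n\setminus A_1'\times A_n$ is precisely the complement of the product.

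For $n=2$, the set $U=X_1\times X_2\setminus A_1\times A_2$ is open, and by the extension of capacities to open sets it suffices to show $\mu_1\otimes\mu_2(B)=0$ for every closed $B\subset U$. Fix such a $B$ and some $t>0$. Consider the set
$$
S_t=\{x\in X_1\mid \mu_2(p_2((\{x\}\times X_2)\cap B))\ge t\}.
$$
We claim $S_t\subset X_1\setminus A_1$. Indeed, if $x\in A_1$, then $(\{x\}\times X_2)\cap B\subset\{x\}\times(X_2\setminus A_2)$, because $B$ misses $A_1\times A_2$. So $p_2((\{x\}\times X_2)\cap B)$ is a closed subset of $X_2\setminus A_2$. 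By monotonicity and the definition of $\mu_2$ on open sets, its $\mu_2$-value is at most $\mu_2(X_2\setminus A_2)=0<t$.

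Now $S_t$ is closed; this follows from upper semicontinuity, and in any case the formula presupposes it. Since $S_t\subset X_1\setminus A_1$, we get $\mu_1(S_t)\le\mu_1(X_1\setminus A_1)=0$. Thus $\mu_1(S_t)\wedge t=0$ for all $t>0$, while for $t=0$ the term is trivially $0$. The second form of the tensor product formula then gives $\mu_1\otimes\mu_2(B)=0$.

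The only delicate points are bookkeeping ones rather than real obstacles. The first is to make sure that the value of a capacity on an open set is taken as the supremum over closed subsets, so that monotonicity passes from closed sets into open ones. The second is to check that the inductive identification of the $n$-fold tensor product agrees with the two-factor formula on the regrouped product.
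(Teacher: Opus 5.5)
Your proof is correct and complete. The paper does not prove this lemma itself but cites it from \cite{R4}, so there is no in-paper proof to compare against. Your argument is a self-contained verification from the explicit tensor-product formula: reduce to two factors by induction, then note that for every $t>0$ the set $\{x\in X_1\mid \mu_2(p_2((\{x\}\times X_2)\cap B))\ge t\}$ is a closed subset of $X_1\setminus A_1$ and is therefore $\mu_1$-null. Because the two-factor step uses the null hypothesis on both factors, the induction goes through however the $n$-fold product is bracketed, which settles the bookkeeping concern you raised.
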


We equip the space $MX$ of all capacities with the natural partial order: for $\nu,\mu\in MX$ we write $\nu\le\mu$ if and only if $\nu(A)\le\mu(A)$ for every $A\in\F(X)$.
The following lemma shows that the tensor product is monotone with respect to each coordinate.

\begin{lemma}\label{admis}
Let $\mu_1\in MX_1,\dots,\nu_i,\mu_i\in MX_i,\dots,\mu_n\in MX_n$ and assume that $\nu_i\le\mu_i$.
Then
$$
\mu_1\otimes\dots\otimes\nu_i\otimes\dots\otimes\mu_n
\le
\mu_1\otimes\dots\otimes\mu_i\otimes\dots\otimes\mu_n.
$$
\end{lemma}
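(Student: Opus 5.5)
The plan is to reduce to two factors and argue by induction, using the fact that the $n$-fold tensor product is built inductively, $\otimes_{i=1}^n\mu_i=(\otimes_{i=1}^{n-1}\mu_i)\otimes\mu_n$. So it suffices to prove two statements for capacities on compacta $Y$ and $Z$ and a closed $B\subset Y\times Z$:

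(a) If $\nu\le\mu$ in $MY$ and $\lambda\in MZ$, then $\nu\otimes\lambda(B)\le\mu\otimes\lambda(B)$ (monotonicity in the first factor).

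(b) If $\nu\le\mu$ in $MZ$ and $\lambda\in MY$, then $\lambda\otimes\nu(B)\le\lambda\otimes\mu(B)$ (monotonicity in the second factor).

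Granting (a) and (b), the induction goes as follows. If $i=n$, apply (b) with $Y=\prod_{j<n}X_j$, $\lambda=\otimes_{j<n}\mu_j$ and $Z=X_n$. If $i<n$, the induction hypothesis gives $\otimes_{j<n}(\dots\nu_i\dots)\le\otimes_{j<n}(\dots\mu_i\dots)$ in $M(\prod_{j<n}X_j)$, and then (a) with the last factor $\mu_n$ finishes the step. The base case $n=2$ is exactly (a) and (b).

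Both (a) and (b) follow directly from the explicit formula
$$
\mu_1\otimes\mu_2(B)=\max\bigl\{t\in[0,1]\mid \mu_1(\{x\mid \mu_2(p_2((\{x\}\times X_2)\cap B))\ge t\})\ge t\bigr\},
$$
whose right-hand side is monotone in each capacity. Set $B_x=p_2((\{x\}\times Z)\cap B)$ and $S_t(\lambda)=\{y\in Y\mid \lambda(B_y)\ge t\}$. The sets $B_x$ and $S_t(\lambda)$ are closed; this is implicit in the formula being well defined.

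For (a), let $t_0=\nu\otimes\lambda(B)$. Then $\nu(S_{t_0}(\lambda))\ge t_0$, and since $\nu\le\mu$ on closed sets, $\mu(S_{t_0}(\lambda))\ge t_0$. Hence $t_0$ belongs to the set whose maximum defines $\mu\otimes\lambda(B)$, and so $\mu\otimes\lambda(B)\ge t_0$.

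For (b), let $t_0=\lambda\otimes\nu(B)$. Pointwise, $\nu(B_y)\le\mu(B_y)$, so $\{y\mid \nu(B_y)\ge t_0\}\subset\{y\mid \mu(B_y)\ge t_0\}$. By monotonicity of $\lambda$ (axiom 2 of a capacity), $\lambda(\{y\mid \mu(B_y)\ge t_0\})\ge t_0$, and as before $\lambda\otimes\mu(B)\ge t_0$.

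I do not expect a real obstacle. The only point needing care is bookkeeping: the inductive definition of the product must be associated consistently, left-nested as above, and the product space $\prod_{j<n}X_j$ must be identified with a single compactum so that the two-factor case applies.
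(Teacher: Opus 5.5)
Your proof is correct and follows the paper's own argument. The paper only remarks that the lemma follows directly from the monotonicity of capacities in the explicit formula for $\mu_1\otimes\mu_2$, which is exactly your steps (a) and (b); your induction over the left-nested product just spells out the bookkeeping for $n$ factors that the paper leaves implicit.
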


The proof of Lemma~\ref{admis} follows directly from the monotonicity of capacities.

The next lemma shows that the max-plus integral is also monotone with respect to the underlying capacity. Its proof is again an immediate consequence of the monotonicity property.

\begin{lemma}\label{monot}
Let $\nu,\mu\in MX$ be such that $\nu\le\mu$.
Then
$$
\int_X^{\vee+} \varphi\, d\nu
\le
\int_X^{\vee+} \varphi\, d\mu
$$
for every $\varphi\in C(X)$.
\end{lemma}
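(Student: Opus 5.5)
The plan is to compare the two maxima in the definition of the max-plus integral term by term. Fix $\varphi\in C(X)$. For each $t\in\R$ the level set $\varphi_t=\{x\in X\mid \varphi(x)\ge t\}$ is closed, since $\varphi$ is continuous. The hypothesis $\nu\le\mu$ therefore gives $\nu(\varphi_t)\le\mu(\varphi_t)$ for every $t$.

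Next we use that $\ln$ is nondecreasing on $[0,1]$ under the convention $\ln(0)=-\infty$. Hence
$$\ln(\nu(\varphi_t))\le\ln(\mu(\varphi_t))$$
for every $t$. If $\nu(\varphi_t)=0$, the left side is $-\infty$ and the inequality holds trivially. Adding $t$ to both sides preserves the inequality, with the convention $-\infty+\gamma=-\infty$. So we get
$$\ln(\nu(\varphi_t))+t\le\ln(\mu(\varphi_t))+t\quad\text{for all } t\in\R.$$

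Finally we take the maximum over $t$. Let $t_0$ be a point where the maximum defining $\int_X^{\vee+}\varphi\,d\nu$ is attained; such a point exists by upper-semicontinuity, as noted after the definition. Then
$$\int_X^{\vee+}\varphi\,d\nu=\ln(\nu(\varphi_{t_0}))+t_0\le \ln(\mu(\varphi_{t_0}))+t_0\le\int_X^{\vee+}\varphi\,d\mu.$$

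There is no real obstacle here. The only points needing care are the closedness of the level sets $\varphi_t$, so that the capacities are defined on them, and handling the value $-\infty$ consistently with the stated conventions. Both are immediate.
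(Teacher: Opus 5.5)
Your proof is correct and follows the paper's approach: the paper only remarks that the lemma is an immediate consequence of monotonicity, and your termwise comparison $\nu(\varphi_t)\le\mu(\varphi_t)$ on the closed level sets is that argument written out. You also handle the convention $\ln(0)=-\infty$ and the passage to the maximum over $t$.
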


\section{Nash equilibrium in mixed strategies}\label{NashEq}

We recall the notion of Nash equilibrium and some existence results. Consider an $n$-player game
$u:X=\prod_{j=1}^n X_j\to\R^n$,
where each $X_i$ is a compact Hausdorff space. The coordinate function $u_i:X\to\R$ represents the payoff of player $i$.
For $x\in X$ and $t_i\in X_i$ we write
$(x;t_i)=(x_1,\dots,x_{i-1},t_i,x_{i+1},\dots,x_n)$.

A point $x\in X$ is called a Nash max-equilibrium (respectively, min-equilibrium) if for every $i\in\{1,\dots,n\}$ and every $t_i\in X_i$,
$$
u_i(x;t_i)\le u_i(x)
\quad
(u_i(x;t_i)\ge u_i(x)).
$$

Now we extend the game
$u:X=\prod_{j=1}^n X_j\to\R^n$
to a game in mixed strategies
$eu:\prod_{j=1}^n MX_j\to\R^n$
by means of the max-plus integral and the tensor product of capacities.

For $(\nu_1,\dots,\nu_n)\in\prod_{j=1}^n MX_j$ we define the expected payoff functions
$eu_i:\prod_{j=1}^n MX_j\to\R$ by
$$
eu_i(\nu_1,\dots,\nu_n)
=
\int_X^{\vee+} u_i\, d(\nu_1\otimes\dots\otimes\nu_n).
$$

Lemma~\ref{contint} together with the continuity of the tensor product implies the following result.

\begin{lemma}\label{cont}
The function $eu_i$ is continuous for each $i\in\{1,\dots,n\}$.
\end{lemma}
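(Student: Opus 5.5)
The plan is to write $eu_i$ as a composition of two continuous maps:
$$
\prod_{j=1}^n MX_j \xrightarrow{\ \otimes\ } M\Bigl(\prod_{j=1}^n X_j\Bigr) \xrightarrow{\ lX^{u_i}\ } \R ,
$$
where the first map sends $(\nu_1,\dots,\nu_n)$ to $\nu_1\otimes\dots\otimes\nu_n$, and $X=\prod_{j=1}^n X_j$. Then $eu_i(\nu_1,\dots,\nu_n)=lX^{u_i}(\nu_1\otimes\dots\otimes\nu_n)$. Since $X$ is a compactum and $u_i\in C(X)$, Lemma~\ref{contint} gives that $lX^{u_i}:MX\to\R$ is continuous. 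So everything reduces to continuity of the tensor product map.

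\textbf{Two factors.} For continuity of the tensor product I will rely on its construction in \cite{KZ} via the monad structure of the capacity functor $M$. The binary map $\otimes: MX_1\times MX_2\to M(X_1\times X_2)$ is built from three pieces:
\begin{itemize}
\item the strength-type map $MX_1\times MX_2\to M(X_1\times MX_2)$ (or $M(MX_1\times X_2)$);
\item the functorial images $M(f)$ of continuous maps $f$, which are continuous because $M$ is a functor on compacta;
\item the monad multiplication $\mu_{X_1\times X_2}:M^2(X_1\times X_2)\to M(X_1\times X_2)$, which is continuous.
\end{itemize}
Moreover \cite{R4} shows that the explicit formula quoted above agrees with this categorical definition. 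Granting that these structural maps are continuous (a standard fact about the capacity monad, established in \cite{NZ}, \cite{KZ}), the binary tensor product is continuous.

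\textbf{Direct check as a fallback.} I would also be prepared to verify continuity directly on the subbase. Two inequalities are needed for a fixed closed $B$:
\begin{enumerate}
\item For $\mu_1\otimes\mu_2(B)<a$: use the upper-semicontinuity condition (3) for $\mu_1$ on the closed sets $\{x\mid \mu_2(B_x)\ge t\}$, where $B_x=p_2((\{x\}\times X_2)\cap B)$. Also use the upper semicontinuity of $x\mapsto \mu_2(B_x)$, which follows from compactness and property (3).
\item For $\mu_1\otimes\mu_2(U)>a$ with $U$ open: approximate $U$ from inside by closed sets of product form, using the second formula with $\wedge t$.
\end{enumerate}
This is the main obstacle. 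Handling the $\max$ over $t$ together with the sections $B_x$ requires care, and a compactness argument over $t\in[0,1]$ to pass from pointwise to uniform control. Following the categorical route avoids this.

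\textbf{$n$ factors.} Since the $n$-fold product is defined inductively, $\otimes_{j=1}^n\nu_j=(\otimes_{j=1}^{n-1}\nu_j)\otimes\nu_n$. Continuity of the $n$-fold map then follows by induction, as a composition of continuous maps. Composing with $lX^{u_i}$ yields continuity of $eu_i$ for each $i\in\{1,\dots,n\}$.
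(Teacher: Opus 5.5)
Your proposal is correct and matches the paper's argument: the paper also writes $eu_i$ as the composition of the tensor product map $\prod_{j=1}^n MX_j\to MX$ with $lX^{u_i}$, and combines Lemma~\ref{contint} with the already-known continuity of the tensor product of capacities, exactly as you do. Your remarks on the monad construction, the direct subbase check and the induction on the number of factors only flesh out the continuity of the tensor product, which the paper takes as known without proof.
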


We now discuss the existence of Nash equilibria in mixed strategies represented by capacities.
There exists a trivial solution to this problem.
Each space $MX_i$ contains the greatest element $\mu_i$, defined by
$$
\mu_i(A)=
\begin{cases}
0,& A=\emptyset,\\
1,& A\neq\emptyset,
\end{cases}
\qquad A\in\F(X_i).
$$
Lemmas~\ref{admis} and~\ref{monot} immediately imply the following theorem.

\begin{theorem}
$(\mu_1,\dots,\mu_n)$ is a Nash max-equilibrium point.
\end{theorem}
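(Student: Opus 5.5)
The plan is to verify the Nash max-equilibrium inequality for the game $eu$ directly, using the order structure of $MX_i$. Fix $i\in\{1,\dots,n\}$ and an arbitrary $\nu_i\in MX_i$. We must show that
$$
eu_i(\mu_1,\dots,\mu_{i-1},\nu_i,\mu_{i+1},\dots,\mu_n)\le eu_i(\mu_1,\dots,\mu_n).
$$

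The first step is to check that $\mu_i$ really is the greatest element of $MX_i$. We need $\nu_i(A)\le\mu_i(A)$ for every closed $A$. This is trivial: if $A=\emptyset$, both values are $0$ by condition (1) of the definition of a capacity, and if $A\neq\emptyset$, then $\nu_i(A)\le 1=\mu_i(A)$. We should also confirm that $\mu_i$ is an upper-semicontinuous capacity, so that it actually lies in $MX_i$. Conditions (1) and (2) are immediate. For (3), suppose $\mu_i(F)<a$. Then either $F=\emptyset$, in which case we take $O=\emptyset$ and only $B=\emptyset$ lies in it, so $\mu_i(B)=0<a$; or $F\neq\emptyset$, which forces $1<a$ and is impossible for $a\in[0,1]$. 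This small verification is the only place requiring any care.

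Next, Lemma~\ref{admis}, applied with $\nu_i\le\mu_i$ and with the other factors equal to $\mu_j$, gives
$$
\mu_1\otimes\dots\otimes\nu_i\otimes\dots\otimes\mu_n\le\mu_1\otimes\dots\otimes\mu_n
$$
in $M(\prod_j X_j)$. Then Lemma~\ref{monot}, applied with $\varphi=u_i\in C(X)$, turns this inequality of capacities into the inequality of max-plus integrals, which is exactly the inequality displayed above. Since $i$ and $\nu_i$ were arbitrary, $(\mu_1,\dots,\mu_n)$ is a Nash max-equilibrium of $eu$.

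I do not expect a genuine obstacle. The argument is a direct chaining of the two monotonicity lemmas, plus the observation that $\mu_i$ is a legitimate, maximal capacity.
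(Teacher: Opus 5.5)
Your proof is correct and follows the paper's argument: the paper also obtains the theorem directly from Lemma~\ref{admis} and Lemma~\ref{monot}, using that $\mu_i$ dominates every $\nu_i\in MX_i$. Your check that $\mu_i$ is an upper-semicontinuous capacity and the greatest element of $MX_i$ supplies a detail the paper only asserts.
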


Since each space $MX_i$ also contains the smallest element, an analogous trivial solution exists for the min-equilibrium problem.

We now restrict attention to the subspace of possibility capacities $\Delta X\subset MX$.
It is easy to see that $\Delta X$ contains the greatest element described above, but does not contain the smallest one.
Consequently, the existence problem for a min-equilibrium in $\Delta X$ admits no trivial solution.
This problem is investigated in the present section.

Thus, we consider the game in mixed strategies
$r:\prod_{j=1}^n \Delta X_j\to\R^n$,
where $r=eu|\prod_{j=1}^n\Delta X_j$.
Its coordinate functions
$r_i:\prod_{j=1}^n \Delta X_j\to\R$
are the restrictions of $eu_i$ to $\prod_{j=1}^n \Delta X_j$.

 To establish the existence of a Nash equilibrium, we employ a non-linear convexity structure.
A family $\mathcal C$ of closed subsets of a compactum $X$ is called a {\it convexity} on $X$ if $\mathcal C$ is closed under intersections and contains both $X$ and the empty set.
The elements of $\mathcal C$ are called $\mathcal C$-convex (or simply convex).
Although we follow the general concept of abstract convexity from \cite{vV}, our definition is different.
We consider only closed convex sets; such structures are called closure structures in \cite{vV}.
Our definition coincides with that used in \cite{W}.
The full family of convex sets in the sense of \cite{vV} can be obtained by taking unions of up-directed families.
In what follows, we assume that every convexity contains all singletons.

A convexity $\mathcal C$ on $X$ is called $T_4$ (or normal) if for any disjoint sets $C_1,C_2\in\mathcal C$ there exist $S_1,S_2\in\mathcal C$ such that
$S_1\cup S_2=X$,
$C_1\cap S_2=\emptyset$,
and $C_2\cap S_1=\emptyset$
(see, for example, \cite{RZ}).

Let $\mathcal C_i$ be a convexity on $X_i$.
A function
$f:X=\prod_{j=1}^n X_j\to\R$
is called quasiconcave (respectively, quasiconvex) with respect to the $i$-th variable if
$$
(f_i^x)^{-1}([t,+\infty))\in\mathcal C_i
\quad
\bigl((f_i^x)^{-1}((-\infty,t])\in\mathcal C_i\bigr)
$$
for every $t\in\R$ and every $x\in X$,
where $f_i^x:X_i\to\R$ is defined by
$f_i^x(z_i)=f_i(x;z_i)$ for $z_i\in X_i$.

\begin{theorem}\label{NN}\cite{R3} Let $f:X=\prod_{j=1}^n X_j\to\R^n$ be a game with a  normal convexity  $\C_i$ defined  on each compactum $X_i$ such that all convex sets are connected, the function $f$ is continuous  and the function $f_i:X\to\R$ is quasiconcave (quasiconvex) with respect to the $i$-th variable for each $i\in\{1,\dots,n\}$. Then there exists a Nash max-equilibrium (min-equilibrium) point.
\end{theorem}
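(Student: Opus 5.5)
Theorem~\ref{NN} is quoted from \cite{R3}, so the plan is to reduce it to a Kakutani-type fixed point theorem for abstract convexities. I treat the max-equilibrium case; the min case is the same argument applied to $-f$, since quasiconvexity of $f_i$ is exactly quasiconcavity of $-f_i$.

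\emph{Best-reply correspondences.} For each $i$ and each $x\in X$ set
$$
B_i(x)=\{t_i\in X_i\mid f_i(x;t_i)=\max_{s\in X_i} f_i(x;s)\}.
$$
This set is nonempty by compactness and continuity. It is $\C_i$-convex, because it equals $(f_i^x)^{-1}([m,+\infty))$ with $m=\max f_i^x$, and quasiconcavity makes this an element of $\C_i$. In particular it is closed and, by hypothesis, connected.

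\emph{Upper semicontinuity.} The correspondence $x\mapsto B_i(x)$ has closed graph. To see this, note that $(x,s)\mapsto f_i(x;s)$ is jointly continuous and $x\mapsto\max f_i^x$ is continuous, the maximum being taken over the compact factor $X_i$. Since $X_i$ is compact, a closed graph gives upper semicontinuity.

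\emph{Product correspondence.} Define $B(x)=\prod_i B_i(x)\subset X$. Give $X$ the product convexity, generated by products of convex sets. This convexity is again normal, and its convex sets are connected, being intersections of products of connected convex sets. A fixed point $x\in B(x)$ is exactly a Nash max-equilibrium.

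\emph{Fixed point step, the main obstacle.} Kakutani's theorem needs a fixed point property for upper semicontinuous correspondences with nonempty convex values. The classical proof uses that convex sets are acyclic or contractible, which is unavailable here. The route I would take is the following.

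\begin{enumerate}
\item Show that for a normal convexity with connected convex sets, every compactum $X$ has the fixed point property for such correspondences. This is the abstract Kakutani theorem; I would invoke it in the form of the fixed point results for normal convexities (as in \cite{W}, \cite{RZ}).
\item To derive it, use normality to build, for each finite open cover, a continuous selection-like approximation through the ``convex hull'' nerve map.
\item Pass to the limit using the closed graph.
\end{enumerate}

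Connectedness of convex sets is where the hypothesis is really used. It makes the convexity behave like an $AR$-type structure: the hull operator is continuous in the Vietoris topology, and connectedness rules out the two-point counterexamples. Continuity of the hull operator is what allows the approximate fixed points to be produced.

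If such a theorem is not available, I would instead argue directly on each factor. Connected, normal convexities on compacta yield a Helly-type intersection property, and fixed points can then be obtained by a KKM-style argument using the closed $\C$-convex sets $\{x\mid f_i(x;s)\ge f_i(x;t)\}$. I expect this step to be the delicate part.
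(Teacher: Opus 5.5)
Your proposal is correct and takes the same route the paper relies on, the argument of \cite{R3} that is mirrored in the proof of Theorem~\ref{Exi}. The steps are: best-reply maps $B_i$ whose values are $\C_i$-convex by quasiconcavity; closed graph, hence USC; the product convexity on $X$; and then the Kakutani-type Theorem~\ref{KA} from \cite{W}. Your reduction of the min case to $-f$ is also fine.

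Two small points. Members of the product convexity are connected because an intersection of products is the product of the factorwise intersections, each of which is $\C_i$-convex. It is not enough that they are intersections of connected sets. Normality of the product convexity follows because two disjoint products are disjoint in some factor, where you separate and then take cylinders.

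Finally, your speculative derivation of the fixed point theorem is unnecessary and not justified by the hypotheses. This covers the nerve maps, the Vietoris continuity of hulls, and the Helly/KKM fallback. Simply cite Theorem~\ref{KA}.
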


We note that in \cite{R3} this theorem was proved only for the max-equilibrium.
However, the same proof applies verbatim to the min-equilibrium case.

We now turn to the existence of Nash equilibria in mixed strategies represented by possibility capacities.
There exists a trivial solution for the max-equilibrium, since the greatest element of each $MX_i$ belongs to $\Delta X_i$.
In contrast, no such trivial solution exists for the min-equilibrium, as $\Delta X_i$ does not contain the smallest element.

To establish the existence of a min-equilibrium, we introduce a suitable convexity structure on $\Delta X$.
We use an idempotent convexity considered in \cite{Ch} for finite-dimensional spaces, where it was called {\it B-convexity}.

The B-convexity on $\Delta X$ for a compactum $X$ was studied in \cite{R6}.
For $\nu,\mu\in\Delta X$ and $s\in[0,1]$, define
$$
(s\cdot\nu\vee\mu)(A)=s\cdot\nu(A)\vee\mu(A),
\qquad A\in\F(X).
$$
It is easy to verify that $s\cdot\nu\vee\mu\in\Delta X$.
A subset $C\subset\Delta X$ is called {\it B-convex} if for every $\nu,\mu\in\Delta X$ and every $s\in[0,1]$ we have $s\cdot\nu\vee\mu\in C$.
Clearly, every B-convex set is connected.
Thus, we consider on $\Delta X$ the convexity $\mathcal C_X$ consisting of all closed B-convex subsets of $\Delta X$.
It was shown in \cite{R6} that the convexity $\mathcal C_X$ is normal for every compactum $X$.

\begin{lemma}\label{QC}
For each $i\in\{1,\dots,n\}$, the map
$r_i:\prod_{j=1}^n \Delta X_j\to[0,1]$
is quasiconvex with respect to the $i$-th variable.
\end{lemma}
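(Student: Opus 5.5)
Fix $i$ and a profile $(\nu_1,\dots,\nu_n)\in\prod_{j}\Delta X_j$; write $\Phi(\mu)=r_i(\nu_1,\dots,\mu,\dots,\nu_n)$ for $\mu\in\Delta X_i$. We must show that for each $t\in\R$ the set $L_t=\{\mu\in\Delta X_i\mid \Phi(\mu)\le t\}$ is closed and B-convex. Closedness follows at once from the continuity of $r_i$ (Lemma~\ref{cont}), so the work is B-convexity. (We read the B-convexity definition as requiring $s\cdot\nu\vee\mu\in C$ whenever $\nu,\mu\in C$.) So take $\mu,\mu'\in L_t$, $s\in[0,1]$, and aim to show $\Phi(s\cdot\mu\vee\mu')\le\Phi(\mu)\vee\Phi(\mu')\le t$.

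\textbf{Step 1: the tensor product preserves maxima in the $i$-th coordinate.} We aim to show
\[
\nu_1\otimes\dots\otimes(\mu\vee\mu')\otimes\dots\otimes\nu_n=(\nu_1\otimes\dots\otimes\mu\otimes\dots\otimes\nu_n)\vee(\nu_1\otimes\dots\otimes\mu'\otimes\dots\otimes\nu_n).
\]
The inequality $\ge$ is Lemma~\ref{admis}. For $\le$, we would use the explicit max--min formula of the tensor product and induct on the factors.
\begin{enumerate}
\item If the $i$-th factor sits in the outer position (the first factor in the formula), then $(\mu\vee\mu')(G)\wedge t=(\mu(G)\wedge t)\vee(\mu'(G)\wedge t)$, and the maximum over $t$ commutes with $\vee$.
\item If it sits in the inner position, the inner set is $\{x\mid \mu(B_x)\vee\mu'(B_x)\ge t\}=G\cup G'$, where $G=\{x\mid\mu(B_x)\ge t\}$ and $G'=\{x\mid\mu'(B_x)\ge t\}$. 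The outer measure of $G\cup G'$ splits as a maximum because the outer factor $\nu_1$ is a \emph{possibility} capacity. This is exactly where the restriction to $\Delta X_j$ is used.
\end{enumerate}
We also note that $\otimes$ is positively homogeneous in the sense that replacing $\mu$ by $s\cdot\mu$ gives a product $\le$ the original one, by Lemma~\ref{admis}, since $s\cdot\mu\le\mu$. Hence
\[
\nu_1\otimes\dots\otimes(s\cdot\mu\vee\mu')\otimes\dots\otimes\nu_n\le P\vee P',
\]
where $P$ and $P'$ are the products with $\mu$ and with $\mu'$ in the $i$-th place.

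\textbf{Step 2: the max-plus integral preserves maxima of capacities.} For capacities $\lambda,\lambda'$ we have $\ln((\lambda\vee\lambda')(\varphi_t))+t=\max\{\ln\lambda(\varphi_t)+t,\ \ln\lambda'(\varphi_t)+t\}$, since $\ln$ is monotone. Taking the maximum over $t$ gives
\[
\int_X^{\vee+}\varphi\,d(\lambda\vee\lambda')=\int_X^{\vee+}\varphi\,d\lambda\vee\int_X^{\vee+}\varphi\,d\lambda'.
\]
Combining this with Step 1 and Lemma~\ref{monot} yields $\Phi(s\cdot\mu\vee\mu')\le\Phi(\mu)\vee\Phi(\mu')\le t$, so $L_t$ is B-convex.

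\textbf{Main obstacle.} The delicate point is Step 1, the maxitivity of the tensor product in an arbitrary coordinate. Since the product of $n$ factors is defined inductively, a middle coordinate appears as outer factor for some stages and inner factor for others. We would first prove the two-factor statements (outer and inner cases) and then argue that the inductive tensor product is associative, as known from the monad structure in \cite{KZ}. We would also check that the product of possibility capacities is again a possibility capacity, so that the inner-case argument applies at every stage.
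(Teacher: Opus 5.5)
Your proof is correct and is essentially the paper's argument: for $n=2$, $i=1$ the paper runs your outer-position bound $(c\cdot\nu_1\vee\nu_2)(G)\wedge\tau\le(\nu_1(G)\wedge\tau)\vee(\nu_2(G)\wedge\tau)$ and pushes the maximum through $\ln(\cdot)+t$ in a single chain, so your Steps 1--2 just split that computation into two pieces. Your inner-position case, which is the only place the possibility property of the other factors is actually used, supplies what the paper covers only by ``without loss of generality'' and ``the general case is analogous.''

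Two small repairs are needed. First, $s\cdot\mu$ is not a capacity for $s<1$, so do not apply Lemma~\ref{admis} to it; use $s\cdot\mu\vee\mu'\le\mu\vee\mu'$ (both in $\Delta X_i$) and then maxitivity. Second, associativity is not needed: inducting along the defining bracketing only ever puts the partial product containing the $i$-th factor either in the outer position or in the inner position with a possibility outer factor.
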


\begin{proof} We prove the lemma for the case $n=2$; the general case is analogous.
Without loss of generality, assume $i=1$.
Fix $s\in\R$ and $\mu\in\Delta X_2$.
We show that the set $(r_1^\mu)^{-1}((-\infty,s])$ is B-convex.

Take $\nu_1,\nu_2\in\Delta X_1$ such that
$r_1(\nu_j,\mu)\le s$ for $j=1,2$, and let $c\in[0,1]$.
For $t\in\R$ set $U_t=(u_1)_t$.
Then
$$r_1(c\cdot\nu_1\vee\nu_2,\mu)=\int_X^{\vee+} u_1 d((c\cdot\nu_1\vee\nu_2)\otimes\mu)=$$
$$=\max\{\ln(((c\cdot\nu_1\vee\nu_2)\otimes\mu)(U_t))+ t\mid t\in\R\}=$$
$$=\max\{\ln(\max \{(c\cdot\nu_1\vee\nu_2)\{x\in X_1|\mu(p_2((\{x\}\times X_2)\cap U_t))\ge \tau\})\wedge \tau\mid \tau\in[0,1]\}+ t\mid t\in\R\}=$$
$$=\max\{\ln(\max \{(c\cdot\nu_1\vee\nu_2)\{x\in X_1|\mu(p_2((\{x\}\times X_2)\cap U_t))\ge \tau\})\wedge \tau\mid \tau\in[0,1]\}+ t\mid t\in\R\}\le$$
$$\le\max\{\ln(\max \{\nu_1\{x\in X_1|\mu(p_2((\{x\}\times X_2)\cap U_t))\ge \tau\})\wedge \tau\mid \tau\in[0,1]\}+ t\mid t\in\R\}\vee$$
$$\vee\max\{\ln(\max \{\nu_2\{x\in X_1|\mu(p_2((\{x\}\times X_2)\cap U_t))\ge \tau\})\wedge \tau\mid \tau\in[0,1]\}+ t\mid t\in\R\}=$$
$$=r_1(\nu_1,\mu)\vee r_1(\nu_2,\mu)\le s.$$
\end{proof}

Theorem~\ref{NN} together with Lemma~\ref{QC} yields the following result.

\begin{theorem}\label{NMP} There exists a Nash min-equilibrium  point for the game with expected payoff functions $r_i:\prod_{j=1}^n \Delta X_j\to[0,1]$.
\end{theorem}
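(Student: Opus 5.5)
The plan is to apply Theorem~\ref{NN} in its min-equilibrium (quasiconvex) version to the game $r:\prod_{j=1}^n\Delta X_j\to\R^n$. Each strategy space is $\Delta X_j$, equipped with the convexity $\mathcal C_{X_j}$ of closed B-convex subsets. So the task reduces to checking the hypotheses of Theorem~\ref{NN} one at a time.

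\textbf{Compactness.} $\Delta X_j$ is a compactum, because it is a closed subspace of the compact Hausdorff space $MX_j$.

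\textbf{The convexity.} $\mathcal C_{X_j}$ is a convexity in the sense of the paper. It is closed under intersections, since intersections of closed B-convex sets are closed and B-convex. It contains $\Delta X_j$, the empty set and all singletons, because $s\cdot\nu\vee\nu=\nu$. It is normal by the result of \cite{R6} quoted above. Every member of $\mathcal C_{X_j}$ is connected. To see this, note that for a B-convex $C$ and $\nu,\mu\in C$ the path $s\mapsto s\cdot\nu\vee\mu$, $s\in[0,1]$, joins $\mu$ to $\nu\vee\mu$ inside $C$. Likewise $s\mapsto s\cdot\mu\vee\nu$ joins $\nu$ to $\nu\vee\mu$. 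Hence $C$ is path-connected. For this I need continuity of $s\mapsto s\cdot\nu\vee\mu$ into $MX$. I expect to verify it directly on the subbase sets $O_-(F,a)$ and $O_+(U,a)$, using that $(s\cdot\nu\vee\mu)(U)=s\nu(U)\vee\mu(U)$ also for open $U$; the latter follows by taking suprema over closed subsets.

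\textbf{Payoffs.} Each $r_i$ is continuous, being the restriction of $eu_i$, which is continuous by Lemma~\ref{cont}. Each $r_i$ is quasiconvex in the $i$-th variable by Lemma~\ref{QC}. With all hypotheses verified, Theorem~\ref{NN} (min version) yields a Nash min-equilibrium.

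The only step that is not pure bookkeeping is the connectedness of B-convex sets. The paper asserts it as clear, so I regard it as the main obstacle. In particular, the continuity of the B-combination path in the capacity topology needs a short check at the endpoint $s=0$, where the path should start at $\mu$. There the upper-semicontinuity of $\nu$ and $\mu$ gives what is needed for $O_-$ neighbourhoods. For $O_+$ neighbourhoods, monotonicity in $s$ is enough.
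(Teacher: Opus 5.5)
Your proposal is correct and is the paper's own route: the paper proves Theorem~\ref{NMP} by applying the min (quasiconvex) form of Theorem~\ref{NN} to $\prod_{j=1}^n\Delta X_j$ with the B-convexities $\mathcal C_{X_j}$. It uses Lemma~\ref{QC} for quasiconvexity, Lemma~\ref{cont} for continuity, the quoted normality of $\mathcal C_{X_j}$, and the connectedness of B-convex sets, which the paper simply calls clear. Your path argument fills in that connectedness step, and it is easier than you expect: for each fixed closed $F$ or open $U$ the values $s\nu(F)\vee\mu(F)$ and $s\nu(U)\vee\mu(U)$ are continuous in $s$, so preimages of the subbase sets $O_-(F,a)$ and $O_+(U,a)$ are open with no appeal to upper semicontinuity at $s=0$.
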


\section{Equilibrium under uncertainty} In this section we consider an equilibrium concept introduced in \cite{DW,EK}, where players choose pure strategies while beliefs about opponents' actions are represented by capacities.

Let $p:X=\prod_{i=1}^n X_i\to\R^n$ be an $n$-player game, where each strategy space $X_i$ is a compact Hausdorff space.
We assume that the payoff function $p$ is continuous.
For $i\in\{1,\dots,n\}$ we denote by
$X_{-i}=\prod_{j\neq i} X_j$
the space of strategy profiles of all players other than player~$i$.
For $x\in X$, the corresponding element of $X_{-i}$ is denoted by $x_{-i}$.
In contrast to the standard game-theoretic framework, the beliefs of player~$i$ about the behavior of the opponents are represented by capacities on $X_{-i}$.

For each $i\in\{1,\dots,n\}$ we define the expected payoff function
$P_i:X_i\times MX_{-i}\to\R$ by
$$
P_i(x_i,\nu)
=
\int_{X_{-i}}^{\vee+} p_i^{x_i}\, d\nu,
$$
where $p_i^{x_i}:X_{-i}\to\R$ is given by
$p_i^{x_i}(y)=p_i(x_i,y)$,
$x_i\in X_i$ and $\nu\in MX_{-i}$.

To establish continuity of the functions $P_i$, we introduce some notation and recall a technical lemma from \cite{R4}.
Let $f:X\times Y\to[0,1]$ be a function.
For $x\in X$ and $t\in[0,1]$, set
$f^x_{\le t}=\{y\in Y\mid f(x,y)\le t\}$.
Analogously, we define the sets
$f^x_{\ge t}$, $f^x_{<t}$, and $f^x_{>t}$.
Note that in the special case $X=\{x\}$ we previously used the shorter notation $f_t=f^x_{\ge t}$.

\begin{lemma}\cite{R4}\label{C0}
Let $f:X\times Y\to\R$ be a continuous function on the product of compacta $X$ and $Y$.
Then for every $x\in X$, $t\in\R$, and $\delta>0$ there exists an open neighborhood $O$ of $x$ such that
$f^z_{\le t}\subset f^x_{<t+\delta}$
(respectively,
$f^z_{\ge t}\subset f^x_{>t-\delta}$)
for all $z\in O$.
\end{lemma}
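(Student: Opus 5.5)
We argue by contradiction, using compactness of $Y$ and continuity of $f$ on $X\times Y$. We treat the first inclusion $f^z_{\le t}\subset f^x_{<t+\delta}$. The second one, $f^z_{\ge t}\subset f^x_{>t-\delta}$, then follows by applying the first to $-f$ with $-t$ in place of $t$, since $(-f)^z_{\le -t}=f^z_{\ge t}$ and $(-f)^x_{<-t+\delta}=f^x_{>t-\delta}$.

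Fix $x\in X$, $t\in\R$ and $\delta>0$, and consider the closed set
$$K=f^x_{\ge t+\delta}=\{y\in Y\mid f(x,y)\ge t+\delta\}.$$
It is compact, being closed in the compactum $Y$. The inclusion $f^z_{\le t}\subset f^x_{<t+\delta}$ says exactly that $f(z,y)>t$ for every $y\in K$. So it suffices to find a neighborhood $O$ of $x$ with $f(z,y)>t$ for all $(z,y)\in O\times K$. If $K=\emptyset$, take $O=X$.

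For each $y\in K$ we have $f(x,y)\ge t+\delta>t$. By continuity of $f$ at $(x,y)$ there are open sets $O_y\ni x$ and $V_y\ni y$ with $f>t$ on $O_y\times V_y$. Cover $K$ by finitely many sets $V_{y_1},\dots,V_{y_k}$ and put $O=\bigcap_{j=1}^k O_{y_j}$. Then $O$ is an open neighborhood of $x$ and $f>t$ on $O\times K$, which is what we need.

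This is the standard tube-lemma argument, so there is no real obstacle. The one point to get right is the direction of the inequalities: the margin $\delta$ is what makes $K$ closed (non-strict inequality) while we only need the strict bound $f>t$ near $K$. Only the compactness of $Y$ is used; $X$ can be any topological space.
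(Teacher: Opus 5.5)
Your proof is correct and complete. The paper does not prove this lemma: it is quoted from \cite{R4} and used only as a black box in the proof of Lemma~\ref{C}, so there is no in-paper argument to compare yours against.

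The steps all check out:
\begin{itemize}
\item Your reformulation is the right one: the inclusion $f^z_{\le t}\subset f^x_{<t+\delta}$ is equivalent to $f(z,\cdot)>t$ on $K=f^x_{\ge t+\delta}$.
\item Covering the compact set $K$ by finitely many boxes $O_y\times V_y$ on which $f>t$ gives the required $O$.
\item The reduction of the second inclusion to the first via $(-f,-t)$ is verified correctly.
\end{itemize}

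Two cosmetic points. First, the opening words ``we argue by contradiction'' do not describe what you then do, since your proof is direct. Second, in your closing remark, the role of $\delta>0$ is not to make $K$ closed, since $f^x_{\ge t}$ is closed as well. Its role is to give the strict inequality $f(x,\cdot)\ge t+\delta>t$ on a closed set, which is what lets each point of $K$ have a box on which $f>t$.

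The other standard route goes by contradiction. Assume no $O$ works and take a net $(z_O,y_O)$ with $z_O\to x$, $f(z_O,y_O)\le t$ and $f(x,y_O)\ge t+\delta$. At a cluster point $y$ of $(y_O)$ this gives $t+\delta\le f(x,y)\le t$. Your direct tube-lemma version instead produces the neighbourhood explicitly, and it makes visible that only compactness of $Y$ is used.
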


\begin{lemma}\label{C} The map $P_i$ is continuous.
\end{lemma}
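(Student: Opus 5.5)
Fix $(x_i,\nu)\in X_i\times MX_{-i}$ and $\varepsilon>0$. The plan is to find a neighborhood $O$ of $x_i$ and a neighborhood $V$ of $\nu$ with $|P_i(z,\mu)-P_i(x_i,\nu)|<\varepsilon$ for all $(z,\mu)\in O\times V$. I would split this with the triangle inequality:
$$|P_i(z,\mu)-P_i(x_i,\nu)|\le |P_i(z,\mu)-P_i(x_i,\mu)|+|P_i(x_i,\mu)-P_i(x_i,\nu)|.$$
The second term is handled directly by Lemma~\ref{contint} applied to $\psi=p_i^{x_i}\in C(X_{-i})$: continuity of $lX_{-i}^{\psi}$ gives a neighborhood $V$ of $\nu$ on which this term is below $\varepsilon/2$. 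The first term must then be bounded uniformly in $\mu\in MX_{-i}$.

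For the first term I would use two properties of the max-plus integral, both read off from the definition $\int^{\vee+}\varphi\,d\mu=\max_t\{\ln\mu(\varphi_t)+t\}$. The first is monotonicity in $\varphi$: if $\varphi\le\psi$ then $\varphi_t\subset\psi_t$, so the integral increases. The second is translation: $\int^{\vee+}(\varphi+c)\,d\mu=\int^{\vee+}\varphi\,d\mu+c$, since $(\varphi+c)_{t}=\varphi_{t-c}$. Together these give
$$\Bigl|\int^{\vee+}\varphi\,d\mu-\int^{\vee+}\psi\,d\mu\Bigr|\le\|\varphi-\psi\|$$
for every capacity $\mu$, because $\psi-\|\varphi-\psi\|\le\varphi\le\psi+\|\varphi-\psi\|$.

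It then remains to show that $z\mapsto p_i^{z}\in C(X_{-i})$ is continuous in the supremum norm at $x_i$. This is the standard uniform-continuity fact for a continuous function on the product of compacta $X_i\times X_{-i}$. I would either invoke the tube lemma directly, or argue via Lemma~\ref{C0}. In the latter route, applying Lemma~\ref{C0} to a finite grid of levels $t$ with mesh below $\delta$ gives both $p^{z}_{\ge t}\subset p^{x_i}_{>t-\delta}$ and $p^{z}_{\le t}\subset p^{x_i}_{<t+\delta}$, and these inclusions yield $\|p_i^z-p_i^{x_i}\|<2\delta$ for $z$ in a finite intersection $O$ of neighborhoods. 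Choosing $\delta$ appropriately, the first term is below $\varepsilon/2$ on $O$, uniformly in $\mu$, which completes the proof.

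The main obstacle is this uniform estimate in $\mu$, since the topology on $MX_{-i}$ gives no direct control. The $1$-Lipschitz property of the max-plus integral in the supremum norm, independent of the capacity, removes that difficulty. An alternative is to use the level-set inclusions of Lemma~\ref{C0} directly inside the maximum defining the integral, as the author likely intends; that also works, since $(p^z)_t\subset(p^{x_i})_{t-\delta}$ gives $P_i(z,\mu)\le P_i(x_i,\mu)+\delta$, and symmetrically.
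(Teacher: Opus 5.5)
Your proof is correct, but it is organized differently from the paper's. The paper never separates the two variables. It proves upper and lower semicontinuity of $P_i$ at $(x,\nu_0)$ directly, using a finite grid of levels $t_k$, the inclusions $f^z_{\ge t_{k+1}}\subset f^x_{\ge t_k}$ supplied by Lemma~\ref{C0}, and subbasic neighborhoods of $\nu_0$ such as $\{\nu\mid \nu(f^x_{\ge t_k})<\nu_0(f^x_{\ge t_k})+\delta\}$ and $\{\nu\mid \nu(f^x_{>t_0-\varepsilon/4})>\nu_0(f^x_{\ge t_0})-\zeta\}$. It does not use Lemma~\ref{contint} at all, and its special case $z=x$ in fact reproves that lemma.

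You instead decouple the variables. Continuity in the capacity at a fixed strategy is delegated to Lemma~\ref{contint}. The strategy variable is controlled by the capacity-independent estimate $\bigl|\int^{\vee+}\varphi\,d\mu-\int^{\vee+}\psi\,d\mu\bigr|\le\|\varphi-\psi\|$, which follows from monotonicity and translation equivariance, together with sup-norm continuity of $z\mapsto p_i^z$.

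Your route has several advantages:
\begin{enumerate}
\item It is shorter.
\item It avoids the level-by-level case analysis. This includes levels with $\nu_0(f^x_{\ge t_k})=0$, where the paper's choice of $\delta$ tacitly needs an extra smallness condition.
\item It gives the stronger uniform bound $|P_i(z,\mu)-P_i(x_i,\mu)|\le\|p_i^z-p_i^{x_i}\|$ for every capacity $\mu$.
\item It would apply verbatim to any integral that is monotone, translation-equivariant and continuous in the capacity.
\end{enumerate}
The price is that it rests on Lemma~\ref{contint}, which the paper only quotes from earlier work, whereas the paper's argument is self-contained.
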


\begin{proof} Fix $x\in X_i$ and $\nu_0\in MX_{-i}$, and suppose that $P_i(x,\nu_0)=s<a$ for some $a\in\R$.
Set $\varepsilon=a-s>0$.
Let $c=\min p_i(X)$ and $b=\max p_i(X)$.
Clearly, $P_i(X_i\times MX_{-i})\subset[c,b]$.
If $c=b$, the claim is trivial, so assume $c<b$.

Choose $n\in\N$ such that $(b-c)/n<\varepsilon/4$ and define
$t_k=c+k(b-c)/n$ for $k\in\{0,\dots,n\}$.
Then $t_0=c$ and $t_n=b$.
Applying the second part of Lemma~\ref{C0} to the continuous function
$f=p_i:X_i\times X_{-i}\to\R$
with $\zeta=(b-c)/n$, we obtain an open neighborhood $O$ of $x$ such that
$f^z_{\ge t_{k+1}}\subset f^x_{\ge t_k}$
for all $z\in O$ and $k\in\{0,\dots,n-1\}$.

Choose $\delta>0$ such that
$$
\ln(\nu_0(f^x_{\ge t_k})+\delta)
<
\ln(\nu_0(f^x_{\ge t_k}))+\varepsilon/4
$$
for all $k$ with $\nu_0(f^x_{\ge t_k})>0$.
Define
$$
V=\{\nu\in MX_{-i}\mid
\nu(f^x_{\ge t_k})<\nu_0(f^x_{\ge t_k})+\delta
\text{ for all }k\in\{0,\dots,n\}\}.
$$
Then $V$ is a neighborhood of $\nu_0$.

For any $(z,\nu)\in O\times V$ and $t\in[c,b]$ we consider three cases.

If $t\le t_1$, then
$$
\ln(\nu(f^z_{\ge t}))+t
\le c+\varepsilon/4
\le P_i(x,\nu_0)+\varepsilon/4.
$$

If $t\in(t_k,t_{k+1}]$ for some $k\ge1$ and $\nu(f^z_{\ge t})>0$, then
$$\ln(\nu(f^z_{\ge t}))+t\le \ln(\nu(f^z_{\ge t_k}))+t\le\ln(\nu(f^x_{\ge t_{k-1}}))+t\le\ln(\nu_0(f^x_{\ge t_{k-1}})+\delta)+t<$$ $$<\ln(\nu_0(f^x_{\ge t_{k-1}}))+\varepsilon/4+t_{k-1}+\varepsilon/2\le P_i(x,\nu_0)+3\varepsilon/4.$$

Finally, if $\nu(f^z_{\ge t})=0$, then
$\ln(\nu(f^z_{\ge t}))+t=-\infty<P_i(x,\nu_0)$.

Thus $P_i(z,\nu)<a$ for all $(z,\nu)\in O\times V$.

Let $P_i(x,\nu_0)=s>a$ for some $a\in\R$. There exists $t_0\in \R$ such that $P_i(x,\nu_0)=\ln(\nu_0(f^x_{\ge t_0}))+t_0$ (here as above we denote $f=p_i$). Put  $\varepsilon=s-a>0$.

Applying  the first part of Lemma \ref{C0}  to the continuous function $f:X_i\times X_{-i}\to\R$, $t=t_0-\varepsilon/2$ and $\delta=\varepsilon/4$ we can choose a neighborhood $O$ of $x$ such that for each $z\in O$ we have $f^z_{\le t_0-\varepsilon/2}\subset f^x_{< t_0-\varepsilon/4}$. Going to the complement, we obtain $f^z_{\ge t_0-\varepsilon/2}\supset f^z_{> t_0-\varepsilon/2}\supset f^x_{\ge t_0-\varepsilon/4}\supset f^x_{> t_0-\varepsilon/4}$.

Choose $\zeta>0$ such that $\ln(\nu_0(f^x_{\ge t_0})-\zeta)>\ln(\nu_0(f^x_{\ge t_0}))-\varepsilon/2$.   Put $V=\{\nu\in M(X_{-i})\mid \nu(f^x_{> t_0-\varepsilon/4})>\nu_0(f^x_{\ge t_0})-\zeta\}$, then $V$ is a neighborhood of $\nu_0$.

Consider any $(z,\nu)\in O\times V$. Then we have $$\ln(\nu(f^z_{\ge t_0-\varepsilon/2}))+t_0-\varepsilon/2\ge \ln(\nu(f^x_{> t_0-\varepsilon/4}))+t_0-\varepsilon/2>\ln(\nu_0(f^x_{\ge t_0})-\zeta)+t_0-\varepsilon/2>$$ $$>\ln(\nu_0(f^x_{\ge t_0}))-\varepsilon/2+t_0-\varepsilon/2= P_i(x,\nu_0)-\varepsilon=a.$$

Thus $P_i(z,\nu)\ge \ln(\nu(f^z_{\ge t_0-\varepsilon/2}))+t_0-\varepsilon/2>a$ for each $(z,\nu)\in O\times V$. Hence the map $P_i$ is continuous.
\end{proof}

For $\nu_i\in M(X_{-i})$ denote by $R_i(\nu_i)=\{x\in X_i\mid P_i(x,\nu_i)=\max\{P_i(z,\nu_i)\mid z\in X_i\}$ the best response correspondence
of player $i$ given belief $\nu_i$. By Lemma~\ref{C}, the set $R_i(\nu_i)$ is well defined and compact.

A system of beliefs $(\nu_1,\dots,\nu_n)$ with $\nu_i\in MX_{-i}$ is called
an {\it equilibrium under uncertainty} if
$$
\nu_i\bigl(X_{-i}\setminus\prod_{j\neq i}R_j(\nu_j)\bigr)=0
\quad\text{for all }i.
$$
This means that each player believes that the opponents choose best responses.
Equilibria under uncertainty were studied for Choquet payoffs in \cite{DW,EK},
for Sugeno payoffs in \cite{R4}, and for $t$-normed integrals in \cite{R5}.

The following example illustrates the difference between Nash equilibrium
and equilibrium under uncertainty.

\begin{ex}\label{ex}
Consider the two-player game
$p:\{a,b\}\times\{a,b\}\to\R^2$
defined by
$p_1(a,a)=p_1(a,b)=1$ and $p_1(b,a)=p_1(b,b)=0$,
while $p_2$ is arbitrary.
Let $\nu\in MX(\{a,b\})$ be the greatest capacity.
 Then the pair $(\nu,\nu)$ is  an Nash  equilibrium for the game with expected payoff functions $ep_i$ as was discussed in Section \ref{NashEq}. On the other hand we have $P_1(a,\nu)=1$ and $P_1(b,\nu)=0$. Hence $R_1(\nu)=\{a\}$ and $(\nu,\nu)$ is not  an equilibrium under uncertainty.
\end{ex}

Next, we prove the existence of an equilibrium under uncertainty. Moreover, we show that each belief can be represented as a tensor product of possibility capacities on the spaces $X_i$. For this purpose, we require a suitable fixed-point theorem.

By a multimap (or set-valued map) from a set $X$ into a set $Y$ we mean a mapping $F:X\to 2^Y$, denoted by $F:X\multimap Y$. If $X$ and $Y$ are topological spaces, then a multimap $F:X\multimap Y$ is called \emph{upper semicontinuous} (USC) if for every open set $O\subset Y$ the set
\[
\{x\in X\mid F(x)\subset O\}
\]
is open in $X$. It is well known that a multimap between compacta with closed values is USC if and only if its graph is closed in $X\times Y$.

Let $F:X\multimap X$ be a multimap. A point $x\in X$ is called a fixed point of $F$ if $x\in F(x)$. The following result is a counterpart of Kakutani's fixed-point theorem in the setting of abstract convexity and is a special case of Theorem~3 from \cite{W}.

\begin{theorem}\label{KA} Let $\C$ be a normal convexity on a compactum $X$ such that all convex sets are connected and $F:X\multimap X$ is a USC multimap with values in $\C\setminus\{\emptyset\}$. Then $F$ has a fixed point.
\end{theorem}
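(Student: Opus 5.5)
The plan is to deduce the statement directly from Theorem~3 of \cite{W}, by checking that our setting is a particular case of its hypotheses. As a fallback, a self-contained contradiction argument of KKM type is sketched below. The reduction proceeds in three steps.

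\emph{Step 1: the convexity structures agree.} A convexity in our sense consists of closed sets, is closed under intersections, and contains $\emptyset$, $X$ and all singletons. This is exactly the notion used in \cite{W}; it is the closure-structure version in the sense of \cite{vV}, where the full family of convex sets is recovered as unions of up-directed families. Our normality condition ($T_4$) is also the one used there: disjoint $C_1,C_2\in\C$ are screened by $S_1,S_2\in\C$ with $S_1\cup S_2=X$.

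\emph{Step 2: the connectedness hypothesis.} The condition in \cite{W} concerns convex hulls of finite sets (polytopes). Since every convex set is assumed connected, every polytope $\operatorname{conv}(F)$ is connected. This is all that the connectedness requirement in \cite{W} needs.

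\emph{Step 3: the multimap.} A USC multimap with nonempty closed convex values is precisely the class treated in \cite{W}. Equivalently, it is a multimap with closed graph in $X\times X$. Applying Theorem~3 of \cite{W} then gives a point $x$ with $x\in F(x)$.

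If a self-contained argument is preferred, I would argue by contradiction. Suppose $x\notin F(x)$ for all $x$. Since $\{x\}$ and $F(x)$ are disjoint convex sets, normality gives $S_1^x,S_2^x\in\C$ with
\[
S_1^x\cup S_2^x=X,\qquad x\notin S_2^x,\qquad F(x)\cap S_1^x=\emptyset .
\]
By upper semicontinuity there is an open neighborhood $U_x$ of $x$ with $U_x\cap S_2^x=\emptyset$ and $F(z)\subset X\setminus S_1^x$ for all $z\in U_x$. Compactness yields a finite subcover $U_{x_1},\dots,U_{x_m}$. One then works inside the connected polytope $P=\operatorname{conv}\{x_1,\dots,x_m\}$. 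The closed sets $S_1^{x_k}\cap P$ and $S_2^{x_k}\cap P$ cover $P$, and a KKM/Helly-type combinatorial argument should produce a point $z\in P$ such that $F(z)$ meets $S_1^{x_k}$ for the index $k$ with $z\in U_{x_k}$. This contradicts the choice of $U_{x_k}$.

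The main obstacle is this last combinatorial step. It is exactly where connectedness of polytopes must replace the contractibility of simplices used in the classical Kakutani theorem, and it is the technical heart of \cite{W}. For that reason I would rely on the citation rather than reprove it, and spend the effort on carefully matching the definitions in Steps~1--3.
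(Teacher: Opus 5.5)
Your proposal is correct and takes the same route as the paper: the paper gives no independent proof and simply notes that the statement is a special case of Theorem~3 of \cite{W}, whose convexity notion it says coincides with its own. Your fallback KKM-type sketch leaves its key combinatorial step unproved, so it could not stand as a proof on its own, but since you explicitly rely on the citation this does not affect correctness.
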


We consider on the space of possibility capacities $\Delta X$ the normal B-convexity  $\C_X$ described in Section \ref{NashEq}. Recall that all elements of $\mathcal{C}_X$ are connected.

\begin{theorem}\label{Exi}
There exists a system $(\mu_1,\dots,\mu_n)\in \Delta X_1\times\cdots\times \Delta X_n$ such that $(\mu_1^\ast,\dots,\mu_n^\ast)$ is an equilibrium under uncertainty with max-plus payoff, where
\[
\mu_i^\ast=\bigotimes_{j\neq i}\mu_j.
\]
\end{theorem}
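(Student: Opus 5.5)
We will apply Theorem~\ref{KA} to a multimap on the product $\Delta X_1\times\cdots\times\Delta X_n$ equipped with the product convexity. For $\mu=(\mu_1,\dots,\mu_n)$ we set $\mu_i^\ast=\bigotimes_{j\neq i}\mu_j\in MX_{-i}$. We define
\[
\Phi(\mu)=\prod_{i=1}^n \{\nu\in\Delta X_i\mid \nu(X_i\setminus R_i(\mu_i^\ast))=0\}.
\]
Once we have a fixed point $\mu\in\Phi(\mu)$, we know that $\mu_j(X_j\setminus R_j(\mu_j^\ast))=0$ for every $j$. Lemma~\ref{P}, applied to the factors $j\neq i$ with $A_j=R_j(\mu_j^\ast)$, then gives $\mu_i^\ast(X_{-i}\setminus\prod_{j\neq i}R_j(\mu_j^\ast))=0$. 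This is exactly the statement that $(\mu_1^\ast,\dots,\mu_n^\ast)$ is an equilibrium under uncertainty. So the whole proof reduces to checking the hypotheses of Theorem~\ref{KA}.

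First, the convexity structure. On $\prod_i\Delta X_i$ we take the convexity generated by products of sets $C_i\in\mathcal C_{X_i}$. Products of convex sets are closed under intersection, and products of connected sets are connected. The intersections of such products are again products, so this family is itself the convexity. For normality, the sets that actually occur as values of $\Phi$ are products. If there is any difficulty, we can fall back on the standard fact that a finite product of normal convexities is normal (the argument in \cite{W} and \cite{RZ}); alternatively, we can apply Theorem~\ref{KA} coordinatewise through the product structure.

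Second, the values of $\Phi$. For a fixed closed $A\subset X_i$, consider $D_A=\{\nu\in\Delta X_i\mid \nu(X_i\setminus A)=0\}$. If $A\neq\emptyset$, this set is nonempty: for $a\in A$ the Dirac-type possibility capacity $\delta_a$ (equal to $1$ on sets containing $a$ and $0$ otherwise) lies in it. It is B-convex, since $(s\nu\vee\mu)(U)=s\nu(U)\vee\mu(U)=0$ on the open set $U=X_i\setminus A$. It is closed, because $\{\nu\mid\nu(U)>0\}=O_+(U,0)$ is open. By Lemma~\ref{C} each $P_i(\cdot,\mu_i^\ast)$ is continuous, so $R_i(\mu_i^\ast)$ is a nonempty compactum. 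Hence every value of $\Phi$ is a nonempty convex set.

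Third, and this is the main obstacle, upper semicontinuity, i.e.\ closedness of the graph of $\Phi$. We will use the continuity of the tensor product (already invoked for Lemma~\ref{cont}) together with the joint continuity of $P_i$ from Lemma~\ref{C}. Together these show that the map $(x,\mu)\mapsto P_i(x,\mu_i^\ast)$ is continuous on $X_i\times\prod_j\Delta X_j$. A standard argument then shows that the argmax correspondence $\mu\mapsto R_i(\mu_i^\ast)$ has closed graph.

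Now take nets $\mu^\alpha\to\mu$ and $\nu^\alpha\to\nu$ with $\nu^\alpha_i(X_i\setminus R_i(\mu^{\alpha\ast}_i))=0$, and suppose for contradiction that $\nu_i(X_i\setminus R_i(\mu_i^\ast))>0$. By the definition of the capacity on open sets, there is a closed $K\subset X_i\setminus R_i(\mu_i^\ast)$ with $\nu_i(K)>0$. Choose an open $W\supset R_i(\mu_i^\ast)$ whose closure misses $K$. By closedness of the graph and compactness, $R_i(\mu_i^{\alpha\ast})\subset W$ eventually. Then $K\subset X_i\setminus\overline W\subset X_i\setminus R_i(\mu_i^{\alpha\ast})$, so $\nu^\alpha_i(X_i\setminus\overline W)=0$ eventually. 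Since $\{\nu\mid\nu(X_i\setminus\overline W)>0\}$ is open, this contradicts $\nu_i(X_i\setminus\overline W)\ge\nu_i(K)>0$ in the limit.

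With upper semicontinuity established, Theorem~\ref{KA} yields the fixed point, and the first paragraph completes the proof.
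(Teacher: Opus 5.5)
Your proposal is correct and is essentially the paper's proof: the same multimap $\mu\mapsto\prod_i\{\nu\in\Delta X_i\mid \nu(X_i\setminus R_i(\mu_i^\ast))=0\}$ on $\prod_j\Delta X_j$ with the product of the B-convexities, Theorem~\ref{KA} for the fixed point, and Lemma~\ref{P} to pass to $\mu_i^\ast$. Your upper-semicontinuity step is slightly more careful than the paper's, because you test against the open set $X_i\setminus\overline W$, for which $O_+(X_i\setminus\overline W,0)$ really is open, whereas the paper's neighbourhood $\{\alpha\mid\alpha_i(K)>0\}$ with $K$ compact is not open in the capacity topology; also, your ``coordinatewise'' fallback is unnecessary, since normality of the product convexity follows directly by separating in a coordinate where the two products are disjoint.
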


\begin{proof}
For each $i\in\{1,\dots,n\}$ define a multimap
\[
\gamma_i:\prod_{j=1}^n\Delta X_j \multimap \Delta X_i
\]
by
\[
\gamma_i(\mu_1,\dots,\mu_n)=
\{\mu\in\Delta X_i\mid \mu(X_i\setminus R_i(\mu_i^\ast))=0\},
\]
where $\mu_i^\ast=\bigotimes_{j\neq i}\mu_j$.
By the definition of the topology on $\Delta X_i$, the set
$\gamma_i(\mu_1,\dots,\mu_n)$ is closed for every
$(\mu_1,\dots,\mu_n)$, and it is immediate that
$\gamma_i(\mu_1,\dots,\mu_n)\in\mathcal{C}_{X_i}$.

Now define
\[
\gamma:\prod_{j=1}^n\Delta X_j \multimap \prod_{j=1}^n\Delta X_j,
\]
by the formula
\[
\gamma(\mu_1,\dots,\mu_n)=\prod_{i=1}^n\gamma_i(\mu_1,\dots,\mu_n).
\]

We verify that $\gamma$ is upper semicontinuous.
Suppose $(\mu,\nu)\in\prod_{j=1}^n\Delta X_j\times\prod_{j=1}^n\Delta X_j$ is a pair such that $\nu\notin\gamma(\mu)$. Then there exists
$i$ and a compact set $K\subset X_i\setminus R_i(\mu_i^\ast)$ with
$\nu_i(K)>0$. Define
\[
O_\nu=\{\alpha\in\prod_{j=1}^n\Delta X_j\mid \alpha_i(K)>0\},
\]
which is an open neighborhood of $\nu$.

By Lemma~\ref{C} and continuity of the tensor product, there exists a neighborhood
$O_\mu$ of $\mu$ such that for every $\alpha\in O_\mu$ we have
$R_i(\alpha_i^\ast)\cap K=\varnothing$. Consequently,
$\beta\notin\gamma(\alpha)$ for all $(\alpha,\beta)\in O_\mu\times O_\nu$,
which shows that $\gamma$ is USC.

Consider the family
\[
\mathcal{C}=\Bigl\{\prod_{i=1}^n C_i\;\Big|\; C_i\in\mathcal{C}_{X_i}\Bigr\}.
\]
This family defines a normal convexity on the compactum
$\prod_{j=1}^n\Delta X_j$, and all its elements are connected.
Therefore, by Theorem~\ref{KA}, the multimap $\gamma$ admits a fixed point
$\mu=(\mu_1,\dots,\mu_n)$.

Finally, for each $i$ we have
\[
\mu_i(X_i\setminus R_i(\mu_i^\ast))=0.
\]
Applying Lemma~\ref{P}, we obtain
\[
\mu_i^\ast\!\left(
\prod_{j\neq i}X_j\setminus\prod_{j\neq i}R_j(\mu_j^\ast)
\right)=0,
\]
which shows that $(\mu_1^\ast,\dots,\mu_n^\ast)$ is an equilibrium under uncertainty.
\end{proof}

Example~\ref{ex} shows that, in general, a Nash equilibrium need not be an equilibrium under uncertainty. We now discuss the converse implication. For this purpose, we require the notion of the density of a possibility capacity.

The notion of density for idempotent measures was introduced in \cite{A}. For each possibility capacity $\nu\in\Delta X$ we associate an upper semicontinuous function $[\nu]:X\to[0,1]$ defined by $[\nu](x)=\nu(\{x\})$, called the \emph{density} of $\nu$. For a closed set $F\subset X$ we have
\[
\nu(F)=\max\{[\nu](x)\mid x\in F\},
\]
so $\nu$ is completely determined by its density.

Conversely, every upper semicontinuous function $f:X\to[0,1]$ with $\max f=1$ defines a possibility capacity $(f)\in\Delta X$ by
\[
(f)(F)=\max\{f(x)\mid x\in F\},
\]
for closed sets $F\subset X$. Clearly, $([\nu])=\nu$ and $[(f)]=f$; see \cite{R2} for details.

It was shown in \cite{R6} that for $\mu_1\in\Delta X_1$, $\mu_2\in\Delta X_2$, and $(x,y)\in X_1\times X_2$ we have
\[
[\mu_1\otimes\mu_2](x,y)=[\mu_1](x)\wedge[\mu_2](y).
\]

\begin{proposition}\label{con}
Let $(f,g):X\times Y\to\mathbb{R}^2$ be a game in pure strategies with compact spaces $X$ and $Y$ and continuous payoff functions $f$ and $g$. If a pair $(\nu,\mu)\in\Delta X\times\Delta Y$ is an equilibrium under uncertainty, then $(\nu,\mu)$ is a Nash equilibrium of the corresponding game in capacities with payoff functions $ef$ and $eg$ defined by the max-plus integral.
\end{proposition}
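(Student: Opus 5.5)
We take the belief of player~1 about player~2 to be $\mu\in\Delta Y$ and the belief of player~2 about player~1 to be $\nu\in\Delta X$. Being an equilibrium under uncertainty then means $\nu(X\setminus R_1(\mu))=0$ and $\mu(Y\setminus R_2(\nu))=0$. We show that $(\nu,\mu)$ is a Nash max-equilibrium of $(ef,eg)$, that is, $ef(\nu',\mu)\le ef(\nu,\mu)$ for every $\nu'\in\Delta X$, and symmetrically for $eg$. The argument works just as well for arbitrary $\nu'\in MX$ by Lemma~\ref{monot}, but we stay in $\Delta X$. The idea is to rewrite everything in terms of densities.

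\emph{Step 1: a density formula for the max-plus integral.} For $\lambda\in\Delta Z$ and $\varphi\in C(Z)$ we establish
$$\int_Z^{\vee+}\varphi\,d\lambda=\max\{\ln[\lambda](z)+\varphi(z)\mid z\in Z\}.$$
The maximum on the right exists because $\ln[\lambda]+\varphi$ is upper semicontinuous. For the inequality $\le$: since $\lambda(\varphi_t)=\max_{z\in\varphi_t}[\lambda](z)$ and $t\le\varphi(z)$ on $\varphi_t$, we get $\ln\lambda(\varphi_t)+t\le\max_z(\ln[\lambda](z)+\varphi(z))$. For $\ge$: given $z$, put $t=\varphi(z)$; then $z\in\varphi_t$, so $\ln\lambda(\varphi_t)+t\ge\ln[\lambda](z)+\varphi(z)$.

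\emph{Step 2: apply it to the product.} Combining Step~1 with the formula $[\nu'\otimes\mu](x,y)=[\nu'](x)\wedge[\mu](y)$ from \cite{R6}, and using $\ln(a\wedge b)=\ln a\wedge\ln b$, we obtain
$$ef(\nu',\mu)=\max_{(x,y)}\bigl(\ln[\nu'](x)\wedge\ln[\mu](y)\bigr)+f(x,y).$$
Step~1 also gives $P_1(x,\mu)=\max_y\bigl(\ln[\mu](y)+f(x,y)\bigr)$. Set $M=\max_{x\in X}P_1(x,\mu)$.

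\emph{Step 3: the upper bound.} Since $\ln[\nu'](x)\wedge\ln[\mu](y)\le\ln[\mu](y)$, we get $ef(\nu',\mu)\le\max_x P_1(x,\mu)=M$ for every $\nu'$. For the equilibrium capacity $\nu$ we get more. The condition $\nu(X\setminus R_1(\mu))=0$ forces $[\nu](x)=0$ for every $x\notin R_1(\mu)$, so such points contribute $-\infty$. Hence $ef(\nu,\mu)\le M$, with the maximum effectively taken over $x\in R_1(\mu)$.

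\emph{Step 4: the matching lower bound.} Because $\nu(X)=1$ and $\nu(X)=\max[\nu]$, there is $x_0$ with $[\nu](x_0)=1$. By Step~3, $x_0\in R_1(\mu)$. At $x_0$ the minimum $\ln[\nu](x_0)\wedge\ln[\mu](y)$ equals $\ln[\mu](y)$, so
$$ef(\nu,\mu)\ge\max_y\bigl(\ln[\mu](y)+f(x_0,y)\bigr)=P_1(x_0,\mu)=M.$$
Therefore $ef(\nu,\mu)=M\ge ef(\nu',\mu)$ for all $\nu'$. The argument for $eg$ is symmetric, using $\mu(Y\setminus R_2(\nu))=0$ and a point $y_0$ with $[\mu](y_0)=1$.

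The only nontrivial ingredient is Step~1, namely the exchange of $\max_t$ with the maximum of the density over $\varphi_t$. One must handle the convention $\ln 0=-\infty$ carefully, and use upper semicontinuity of $[\lambda]$ to know that the maxima are attained.
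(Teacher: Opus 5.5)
Your proof is correct and follows essentially the same route as the paper: both bound $ef(\nu',\mu)$ above by $\max_x P_1(x,\mu)$ by discarding the density of $\nu'$ (the paper does this by passing to the greatest capacity and arguing by contradiction), and both bound $ef(\nu,\mu)$ below by $P_1(x_1,\mu)$ at a point with $[\nu](x_1)=1$, which the equilibrium condition forces into $R_1(\mu)$. Your direct formulation via the density formula of Step~1 is slightly cleaner than the contradiction argument, and it additionally yields the identity $ef(\nu,\mu)=\max_x P_1(x,\mu)$.
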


\begin{proof}
Assume, contrary to the statement, that $(\nu,\mu)\in\Delta X\times\Delta Y$
is an equilibrium under uncertainty but is not a Nash equilibrium of the
corresponding game in capacities. Then, without loss of generality, there
exists $\nu'\in\Delta X$ such that
\[
ef(\nu',\mu)>ef(\nu,\mu).
\]

Let $\nu_1\in\Delta X$ denote the greatest element, i.e.
$\nu_1(A)=1$ for every nonempty closed set $A\subset X$, equivalently
$[\nu_1](x)=1$ for all $x\in X$. Since $\nu_1$ dominates every element of
$\Delta X$, we have
\[
ef(\nu',\mu)\le ef(\nu_1,\mu),
\]
and hence
\[
ef(\nu_1,\mu)>ef(\nu,\mu).
\]

By the definition of the max-plus integral,
\[
ef(\nu_1,\mu)
=
\max\{\ln((\nu_1\otimes\mu)(f_t))+t\mid t\in\mathbb{R}\}.
\]
Using the description of the density of the tensor product, this expression
can be rewritten as
\[
ef(\nu_1,\mu)
=
\max\Bigl\{
\max\{\ln([\nu_1](x)\wedge[\mu](y))\mid (x,y)\in f_t\}+t
\;\Big|\; t\in\mathbb{R}
\Bigr\}.
\]
Since $[\nu_1](x)=1$ for all $x\in X$, we obtain
\[
ef(\nu_1,\mu)
=
\max\Bigl\{
\max\{\ln([\mu](y))\mid \exists\,x\in X \text{ with } (x,y)\in f_t\}+t
\;\Big|\; t\in\mathbb{R}
\Bigr\}.
\]

Consequently, there exist $t_0\in\mathbb{R}$ and $(x_0,y_0)\in f_{t_0}$ such that
\[
\ln([\mu](y_0))+t_0>ef(\nu,\mu).
\]

Since $\nu\in\Delta X$, there exists $x_1\in X$ with $[\nu](x_1)=1$. Then,
by the definition of $ef(\nu,\mu)$,
\[
\begin{aligned}
ef(\nu,\mu)
&=
\max\Bigl\{
\max\{\ln([\nu](x)\wedge[\mu](y))\mid (x,y)\in f_t\}+t
\;\Big|\; t\in\mathbb{R}
\Bigr\} \\
&\ge
\max\Bigl\{
\max\{\ln([\mu](y))\mid (x_1,y)\in f_t\}+t
\;\Big|\; t\in\mathbb{R}
\Bigr\}
= P_1(x_1,\mu).
\end{aligned}
\]

On the other hand, from $(x_0,y_0)\in f_{t_0}$ we obtain
\[
P_1(x_0,\mu)\ge \ln([\mu](y_0))+t_0>ef(\nu,\mu)\ge P_1(x_1,\mu).
\]
Thus $x_1$ is not a best response to $\mu$, i.e. $x_1\notin R_1(\mu)$.
However, since $[\nu](x_1)=1$, we have $\nu(\{x_1\})=1$, which contradicts the
assumption that $(\nu,\mu)$ is an equilibrium under uncertainty.

This contradiction completes the proof.
\end{proof}

We emphasize that Proposition~\ref{con} is only a partial result, since it requires $(\nu,\mu)\in\Delta X\times\Delta Y$. The importance of this case is highlighted by Theorem~\ref{Exi}. The general situation remains open.

\begin{problem}\label{pr}
Let $(f,g):X\times Y\to\mathbb{R}^2$ be a game in pure strategies with compact spaces $X$ and $Y$ and continuous payoff functions $f$ and $g$. Suppose that a pair of capacities $(\nu,\mu)\in MX\times MX$ is an equilibrium under uncertainty. Is $(\nu,\mu)$ necessarily a Nash equilibrium of the corresponding game in capacities with payoff functions $ef$ and $eg$ defined by the max-plus integral?
\end{problem}

\section{Conclusion}

In this paper we investigated non-cooperative games under uncertainty within a framework based on capacities and max-plus integrals. Two equilibrium concepts were analyzed: Nash equilibrium in mixed strategies represented by non-additive measures, and equilibrium under uncertainty, where players choose pure strategies while evaluating outcomes with respect to non-additive beliefs.
For games with compact strategy spaces and continuous payoff functions, we established existence results for both equilibrium notions. In particular, we showed that while Nash equilibria in mixed strategies always exist in the full space of capacities due to the presence of extremal elements, the existence problem becomes nontrivial when attention is restricted to possibility capacities. By employing an idempotent convexity structure and an abstract fixed point theorem, we proved the existence of Nash min-equilibria in this setting.
We further demonstrated that equilibrium under uncertainty can be ensured for games with max-plus payoff functions and possibility-valued beliefs, and that such equilibria can be represented via tensor products of individual possibility capacities. The relationship between Nash equilibrium in capacities and equilibrium under uncertainty was clarified: while the two concepts do not coincide in general, equilibrium under uncertainty implies Nash equilibrium in the case of possibility capacities.
An open problem remains concerning the general relationship between these equilibrium notions beyond the class of possibility capacities.


\end{document}